\documentclass{article}

\PassOptionsToPackage{numbers}{natbib}

\usepackage[preprint]{neurips_2024}%
\usepackage[utf8]{inputenc} 
\usepackage[T1]{fontenc}    
\usepackage{hyperref}       
\usepackage{url}            
\usepackage{booktabs}       
\usepackage{amsfonts}       
\usepackage{nicefrac}       
\usepackage{microtype}      
\usepackage{xcolor}         
\usepackage{verbatim}
\usepackage{mathrsfs}
\usepackage{amsmath}
\usepackage{amsthm}
\usepackage{graphicx}
\newtheorem{theorem}{Theorem}[section]
\newtheorem{lemma}{Lemma}[section]
\newtheorem{corollary}{Corollary}[section]

\newtheorem{definition}{Definition}[section]

\title{Resolving positive semi-definiteness in physics-informed kernels for scientific machine learning}
\author{J. Moser$^{1}$, C. Albert$^{1}$, S. Ranftl$^{2, 3, \ast}$\\
        (1) Institute of Theoretical \& Computational Physics, Graz University of Technology, Austria\\
        (2) Courant Institute, New York University, New
York, USA \\
(3) Division of Applied Mathematics, Brown University, USA\\
($\ast)$ Current address: School of Mechanical Engineering, Purdue University, USA. \\
Correspondence: j.moser@tugraz.at, sranftl@purdue.edu 
       }

\begin{document}

\maketitle

\begin{abstract}
Many modern machine learning models can be understood as kernel-based function-space models, including Gaussian processes and neural tangent kernels. In scientific machine learning, differential operators are increasingly used to encode physical structure directly into such models. However, it has remained unclear under which conditions these constructions are valid machine learning models, i.e. preserve positive semi-definiteness, and whether observed instabilities arise from ill-posed modeling or numerical effects.
Here, we establish a simple and sufficient condition for positive semi-definiteness: for linear differential operators of order m, the base kernel must be m-times continuously differentiable. Crucially, this guarantee holds for operators with non-constant and even discontinuous coefficients. Examples are ubiquitous in physical systems, including diffusion, material elasticity, wave propagation in inhomogeneous media, and quantum systems.
We conclude that remaining instabilities are attributable to numerical issues, providing a unifying validity foundation for operator-informed kernel methods.
\end{abstract}

\begin{figure}[ht!]
\begin{center}
  \includegraphics[width=0.9\textwidth]{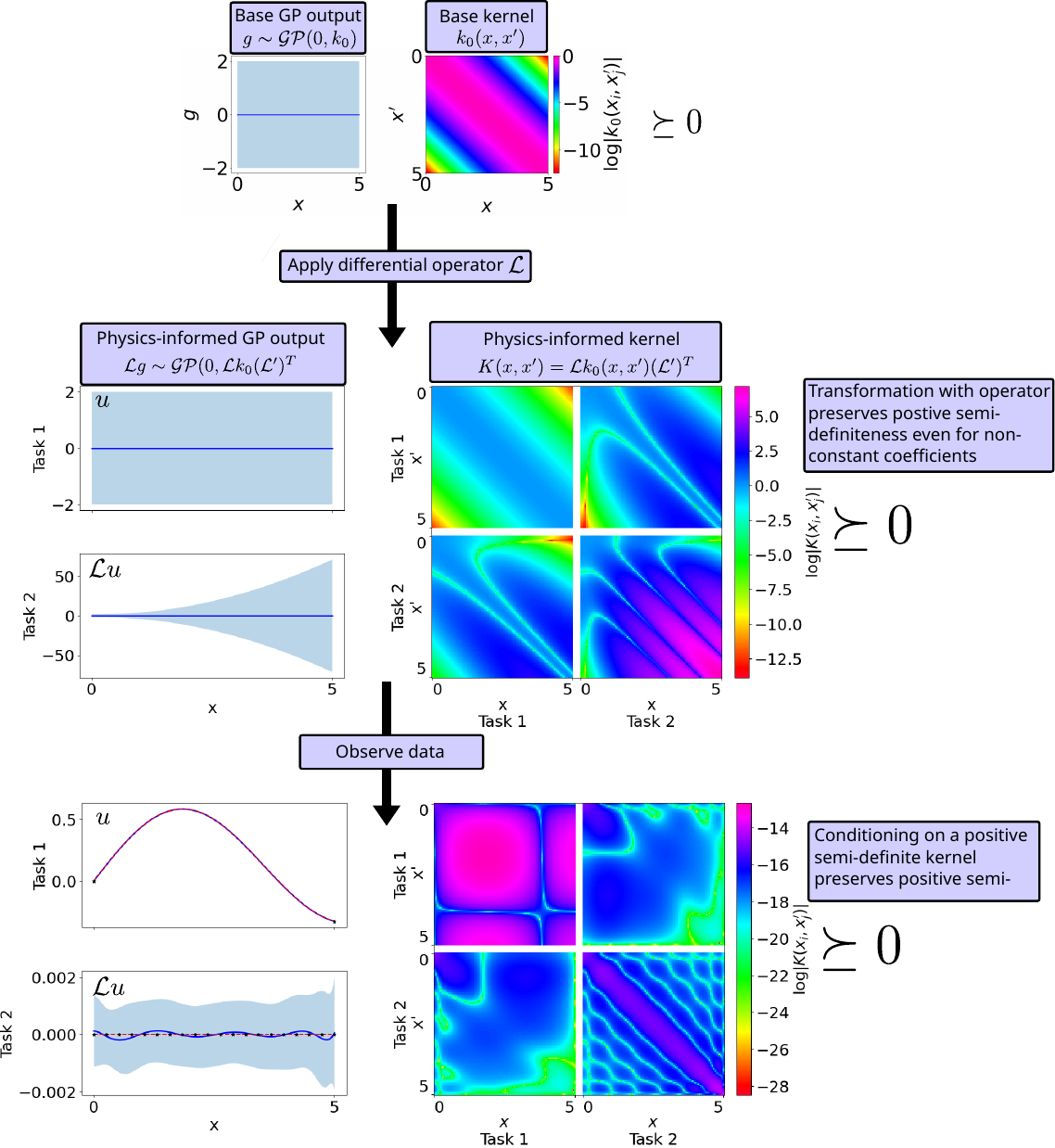}
\end{center}
\caption{Graphical abstract: Stepwise construction of a physics-informed Gaussian Process (PIGP) solving a Bessel equation. We show the PIGP output in the left panels (Task 1 and Task 2), characterized by the mean function in dark blue and variance through the $2\sigma$ interval in light blue, the true solution in red, and the corresponding kernel for the evaluated points as heatmaps in the right panels. Because there is a substantial scale difference between the kernel values across tasks, we show the logarithm of the absolute values of the kernel. We start with a simple zero-mean GP with a simple base kernel that is known to be positive semi-definite (top row). Application of the transformation (i.e. differential operator $\mathcal{L}$) results in a non-stationary, correlated multitask PIGP, with a highly non-trivial kernel structure (middle row). Our results guarantees the positive semi-definiteness of this physics-informed kernel. Lastly, we obtain the desired solution of the differential equation by conditioning on observed data (bottom row), which is known to preserve positive semi-definiteness.}
\label{fig:graphical_abstract}
\end{figure}

\section{Introduction}

Scientific machine learning seeks to incorporate physical structure into data-driven models, enabling learning systems that are consistent with governing laws \cite{karniadakis2021physics}. A common strategy is to encode differential equations directly into the model, either through loss functions as in physics-informed neural networks (PINNs) \cite{raissi2018physics}, or through kernel constructions in function space. In the latter case, physical structure is introduced by applying linear differential operators associated with ordinary or partial differential equations to kernel functions \cite{raissi_machine_2017, harkonen_gaussian_2023, jidling2017linearly}.

This operator-based perspective spans several model classes. In Gaussian processes, this leads to physics-informed Gaussian processes (PIGPs), where the kernel is transformed by the differential operator \cite{wahlstrom2015, raissi_machine_2017, lange-hegermann_algorithmic_2018, jidling2017linearly, macedo2010learning, pmm, besginow_constraining_2022} and has even been demonstrated for non-linear partial differential equations \cite{chen2021solving}. In neural networks, the neural tangent kernel (NTK) provides a complementary function-space view in the infinite-width limit \cite{jacot2018neural}, and has emerged as a fundamental tool for analyzing the training dynamics and failure modes of PINNs \cite{wang2022and}. These connections show that a wide range of modern learning methods can be understood as kernel-based models in which physical structure is imposed through differential operators.

Despite their increasing use, the theoretical validity of operator-informed kernel constructions remains a topic of discussion \cite{matsumoto, harkonen_gaussian_2023}. Particularly for physical systems where the governing differential operators have non-constant or discontinuous coefficients, users frequently observe numerical instability issues, and it remains unclear whether this behaviour stem from an invalid kernel construction or from numerical conditioning. This begs the question under which conditions the application of such operators preserves positive semi-definiteness of the kernel, a necessary and sufficient requirement for a valid kernel-based model. 

\subsection{Related work and contribution}

While the validity of simple differential kernels is well established \cite{rasmussen_gaussian_2006}, existing treatments of more general operator-informed kernels either sidestep the question of non-constant coefficients entirely or address it only under restrictive assumptions:
\citet{jidling2017linearly} require the relevant derivatives of the kernel to exist, but neither prove this nor discuss coefficient functions. \citet{lange-hegermann_algorithmic_2018} states closure of the kernel class only for constant-coefficient operators. \citet{harkonen_gaussian_2023} and \citet{pfortner2024} give measure-theoretic proofs of validity for linear continuous operators, and \citet{matsumoto} flag boundedness of differential operators as an issue that they resolve by providing abstract conditions based on dual spaces and integrability conditions on the operator. 
Therefore, to date, determining whether a given physics-informed kernel is actually admissible requires heavy measure-theoretic and functional analysis of the operator and the spaces it is defined on.

As a result, the practically important case of non-constant, and possibly discontinuous, coefficient functions is either excluded outright or tacitly assumed in the literature. 
Alas, differential operators involving non-constant or discontinuous coefficients are ubiquitous in the descriptions of physical systems. Examples include heterogeneous diffusion, elasticity of composites, wave propagation in inhomogeneous media, subsurface flow in porous materials, robotics \cite{solin2018modeling} and linear quantum systems such as the Schrödinger equation with spatially varying potential (e.g. quantum wells, step potentials).

In this work, we establish a simple and general sufficient condition for the validity of operator-informed kernels. For linear differential operators of order $m$, it is sufficient that the base kernel is $m$-times continuously differentiable and that the operator is linear in the state variable. Crucially, this result holds for operators with non-constant and even discontinuous coefficients provided the coefficients are well defined and finite at the evaluation points, thereby covering a broad class of physically relevant models.

As a consequence, outside this condition, positive semi-definiteness is no longer guaranteed, while compliance ensures that remaining instabilities are attributable solely to numerical issues, with spectral ill-conditioning being an important mechanism.  Our result therefore separates model validity from numerical stability and provides a unifying validity foundation for operator-informed kernel methods in scientific machine learning. Because our theorem applies to any positive semi-definite base kernel, 
it extends beyond Gaussian processes to neural tangent kernels and a broad class of function-space models, including variants such as Kriging and deep Gaussian processes \cite{damianou2013deep}, kernel ridge regression \cite{vovk2013kernel} and support vector machines \cite{suthaharan2016support}, random feature approximations \cite{rahimi2007random}, probabilistic numerics \cite{pmm}, and radial basis function interpolation methods.

\section{Notation and Definitions}
 \newcommand{\vv}{\mathbf}
 \newcommand{\op}{\mathcal{L}}
\newcommand{\fli}{\tilde{\op}_{im}\big\rvert_{\vv x = \vv x_i}}
\newcommand{\flj}{\tilde{\op}'_{jm}\big\rvert_{\vv x' = \vv x_j}}

Let $\mathcal{X} \subset \mathbb{R}^d$ be open and 
$k:\mathcal{X}\times\mathcal{X}\to\mathbb{R}$ a kernel.

\begin{definition} \label{def:PSD}
A kernel $k$ is positive semi-definite (PSD), denoted $k \succeq 0$, if
\begin{equation}
    \sum_{i,j=1}^N c_i k(\mathbf{x}_i,\mathbf{x}_j)c_j \geq 0 \label{eq:PSD1}
\end{equation}
for all $\mathbf{x}_1,\dots,\mathbf{x}_N \in \mathcal{X}$ and $c_1,\dots,c_N \in \mathbb{R}$.
\end{definition}

Any PSD kernel $k_0$ induces a reproducing kernel Hilbert space (RKHS) \cite{rkhs} $\mathcal{H}_0$ with inner product $\langle \cdot,\cdot\rangle_{\mathcal{H}_0}$, satisfying the \emph{reproducing property} 
\begin{equation} \label{eq:reproducing-property}
    \langle k_0(\mathbf{x},\mathbf{x}'), h \rangle_{\mathcal{H}_0} = h(\mathbf{x}')
    \quad \forall h \in \mathcal{H}_0.
\end{equation}

We consider linear differential operators $\mathcal{L}$ of order $m$, to be defined precisely in the main theorem.

Given a base kernel $k_0$, we define the operator-transformed kernel
\begin{equation}
    K(\mathbf{x},\mathbf{x}') = \mathcal{L} \, k_0(\mathbf{x},\mathbf{x}') \, (\mathcal{L}')^{\mathrm{T}},
    \label{eq:LkL}
\end{equation}
where $\mathcal{L}$ acts on $\mathbf{x}$ and the matrix transpose $(\mathcal{L}')^{\mathrm{T}}$ acts on $\mathbf{x}'$.

\begin{definition}[{\cite[Def.~4.36]{svm}}]
    Let $k_0(\vv x, \vv x')$ be a kernel on an open ${\cal X} \subset \mathbb{R}^d.$ For $m\geq 0$, we say that $k_0(\vv x, \vv x')$ is m-times \emph{continuously differentiable} if ${\cal D}^{\alpha}{\cal D}'^{\alpha}k_0(\vv x, \vv x'):{\cal X} \times {\cal X} \rightarrow \mathbb{R}$ exists and is continuous for all multi-indexes $\alpha \in \mathbb{N}_0^d$ with $|\alpha|\leq m$.
\end{definition}
Operator application to kernels of the form \eqref{eq:LkL} is understood in the classical sense, requiring the kernel to be continuously differentiable.

\section{Main Statement}


\begin{theorem}[Positive semi-definiteness of physics-informed kernels]

Let $\mathcal{X} \subset \mathbb{R}^d$ be a $d$-dimensional, open domain. 
Let $\mathcal{L}$ be a matrix of linear differential operators on $\mathcal{X}$ with coefficient functions $f_k$ defined on the whole domain. The elements of this matrix can be written as a sum over the finite index set $\mathcal{K}$
\begin{equation} \label{eq:general-diff-operator}
    \op_{ls}= \sum_{k\in \mathcal{K}} f_k(\vv x){\cal D}^{\alpha_k} := \sum_{k\in \mathcal{K}} f_k(\vv x)\frac{\partial^{|\alpha_k|}}{\partial x_1^{\alpha_{k_1}}\partial x_2^{\alpha_{k_2}}...} \;,
\end{equation}
with multi-indices $\alpha_k$ and $|\alpha_k|\leq m$, where $m \in \mathbb{N}$ is  the highest order of differentiation in any entry $\mathcal{L}_{l s}$ . 
Let $k_0(\mathbf{x}, \mathbf{x}') \in C^{m,m}(\mathcal{X}\times\mathcal{X})$ be a positive semi-definite base kernel function that is $m$-times continuously differentiable. 
Then, physics-informed kernels of the form 
\begin{equation}\notag
    K(\vv x, \vv x^\prime) =  \op k_0(\vv x,\vv x')(\mathcal{L}')^{\mathrm{T}}
\end{equation}
are positive semi-definite.
\label{th:1}
\end{theorem}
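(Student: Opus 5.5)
Because the coefficients $f_k$ are only evaluated pointwise, the plan is to reduce positive semi-definiteness of $K$ to a statement about finitely many point functionals in the RKHS $\mathcal{H}_0$ of $k_0$; no regularity of the $f_k$ is needed. Since $K$ is matrix valued, Definition~\ref{def:PSD} is read for the scalar kernel on $\mathcal{X}\times\{1,\dots,L\}$, where $L$ is the number of rows of $\op$. We must then show
\[
\sum_{i,j=1}^N\sum_{l,l'} c_{il}\,K_{ll'}(\vv x_i,\vv x_j)\,c_{jl'}\geq 0 .
\]
Expanding $K_{ll'}=\sum_{s,s'}\op_{ls}(\op'_{l's'})k_0\,\delta_{ss'}$ shows that $K$ is well defined and scalar-valued, with the same base kernel $k_0$ for each component $s$. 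If $k_0$ is instead matrix-valued, the same argument goes through with vector-valued RKHS notation.

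\textbf{Step 1 (derivative reproducing property).} For $k_0\in C^{m,m}$ and every $|\alpha|\le m$, the partial derivative $\partial^{\alpha}k_0(\cdot,\vv x')$ with respect to the second argument, evaluated at $\vv x'$, lies in $\mathcal{H}_0$. Moreover, $\langle \partial^{\alpha}k_0(\cdot,\vv x'),h\rangle_{\mathcal{H}_0}=\mathcal{D}^\alpha h(\vv x')$ for all $h\in\mathcal{H}_0$, and hence
\[
\langle \partial^{\alpha}k_0(\cdot,\vv x),\partial^{\beta}k_0(\cdot,\vv x')\rangle_{\mathcal{H}_0}=\mathcal{D}^\alpha\mathcal{D}'^\beta k_0(\vv x,\vv x').
\]
This is exactly \cite[Lemma~4.34, Cor.~4.36]{svm}, which I would invoke. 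In outline: difference quotients of $k_0(\cdot,\vv x')$ form a Cauchy sequence in $\mathcal{H}_0$, because their mutual inner products are difference quotients of $k_0$ that converge by the continuity of $\mathcal{D}^\alpha\mathcal{D}'^\alpha k_0$. One then inducts on $|\alpha|$. Mixed derivatives with $\alpha\neq\beta$ exist by the same Cauchy argument.

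\textbf{Step 2 (feature vector).} For each evaluation point and each output row $l$, and for each component $s$, define
\[
\Phi_{ls}(\vv x_i)=\sum_{k} f_k(\vv x_i)\,\partial^{\alpha_k}k_0(\cdot,\vv x_i)\in\mathcal{H}_0,
\]
where $f_k$ and $\alpha_k$ are the coefficients and multi-indices of the entry $\op_{ls}$. Only the finite numbers $f_k(\vv x_i)$ enter, so discontinuity of $f_k$ is irrelevant. By bilinearity of the inner product and Step 1,
\[
K_{ll'}(\vv x_i,\vv x_j)=\sum_s\langle \Phi_{ls}(\vv x_i),\Phi_{l's}(\vv x_j)\rangle_{\mathcal{H}_0}.
\]
Here one must check that the classical derivative $\op\,k_0\,(\op')^{\mathrm T}$ coincides with this inner product. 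It does, because the coefficient $f_k(\vv x)$ multiplies after differentiation, and differentiation in $\vv x$ and in $\vv x'$ act on separate arguments, so the operations commute.

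\textbf{Step 3 (conclusion).} The quadratic form equals
\[
\sum_s\Big\|\sum_{i,l}c_{il}\,\Phi_{ls}(\vv x_i)\Big\|^2_{\mathcal{H}_0}\ge 0,
\]
which proves the theorem.

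The main obstacle is Step 1: justifying that $\mathcal{D}^\alpha k_0(\cdot,\vv x')\in\mathcal{H}_0$ and that derivatives can be exchanged with the inner product, using only $C^{m,m}$ regularity. I would rely on the cited Steinwart–Christmann result rather than reprove it.
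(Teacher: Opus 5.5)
Your proposal is correct and is essentially the paper's argument: both proofs write the quadratic form as $\sum_s \bigl\|\sum_{i,l} c_{il}\,\Phi_{ls}(\mathbf{x}_i)\bigr\|_{\mathcal{H}_0}^2$, using representers in $\mathcal{H}_0$ of the point-evaluated operator functionals (the paper's $\phi_{im}$ is your $\sum_l c_{il}\Phi_{lm}(\mathbf{x}_i)$). The only difference is how the representers are obtained: the paper gets them abstractly, proving boundedness from the derivative bound in Steinwart--Christmann's Corollary~4.36 and then applying Riesz, whereas you name them explicitly as $\sum_k f_k(\mathbf{x}_i)\,\partial^{\alpha_k}k_0(\cdot,\mathbf{x}_i)$ via the derivative reproducing property, which makes the identification of the classical iterated derivative $\mathcal{L}k_0(\mathcal{L}')^{\mathrm{T}}$ with the inner product slightly more transparent.
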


\textbf{Proof sketch}: The operator-transformed kernel can be written as a finite sum of compositions of bounded linear functionals acting on the base kernel. By the Riesz representation theorem in the reproducing kernel Hilbert space of the base kernel, these functionals admit representers, allowing the transformed kernel to be expressed as a sum of inner products. Positive semi-definiteness then follows from the non-negativity of the induced quadratic form. The details are shown in Sec. \ref{sec:proof}.

Importantly, the coefficients $f_k(\vv x)$ in \eqref{eq:general-diff-operator} need not  be constant nor continuous. The generality of the result stems from the simplicity of the proof, which relies only on fundamental properties of reproducing kernel Hilbert spaces, that is the reproducing property of the kernel and the Riesz representation theorem. These properties are also the foundations underlying commonly used concepts such as the kernel trick and the representer theorem. 


\section{Examples}
All experimental details for the following examples are documented in Appendix~\ref{app:experimental_details}.

\subsection{Physics}

Theorem \ref{th:1} implies the positive semi-definiteness of the physics-informed kernel and therefore validity of the corresponding physics-informed Gaussian process for several fundamental classes of physical systems ubiquitous in engineering and science.

Let's consider Bessel's equation. The solution for $n=1$ is visualized in the graphical abstract (Figure \ref{fig:graphical_abstract}). In operator-matrix structure, it is defined as
\begin{align}
   \vv x^2\partial_x^2y(\vv x)+\vv x\partial_xy(\vv x)+(\vv x^2-n^2)y(\vv x) = 0\,, 
\end{align}
and can be parametrized by a physics-informed kernel
\begin{align}
    K(\vv x, \vv x')&= \underbrace{\begin{pmatrix}1\\\vv x^2\partial_x^2+\vv x\partial_x+\vv x^2-n^2
    \end{pmatrix}}_{=\mathcal{L}} \underbrace{\begin{pmatrix}1&\vv x'^2\partial_x'^2+\vv x'\partial_x'+\vv x'^2-n^2
    \end{pmatrix}}_{=(\mathcal{L}^\prime)^{\mathrm{T}}}k_0(\vv x, \vv x')\;, 
    \label{eq:besselop}
\end{align}
where $\partial_x' \equiv \frac{\partial}{\partial \vv x'}$ denotes differentiation with respect to the second kernel argument, and, likewise, $\partial_x'^2$ denotes the second derivative w.r.t. the second kernel argument.  
The Schrödinger equation provides another example of a linear differential operator with spatially varying coefficients:
\begin{equation}
    \mathcal{L}\psi(\vv x)
    =
    -\frac{\hbar^2}{2m} \Delta \psi(\mathbf{x})
    +
    V(\mathbf{x}) \psi(\mathbf{x}), \label{eq:schroedinger}
\end{equation}
where $V(\mathbf{x})$ denotes the potential, which may be spatially varying or discontinuous.

Heterogeneous diffusion, heat conduction and subsurface flow in porous media are governed by operators of the form
\begin{equation} 
    \mathcal{L}u(\vv x)
    =
    -\nabla \cdot \big(\kappa(\mathbf{x}) \nabla u(\mathbf{x})\big), \label{eq:diffusion}
\end{equation}
where the conductivity $\kappa(\mathbf{x})$ may be spatially varying. 

Similarly, in linear elasticity of layered or composite materials, the displacement field $\mathbf{u}$ satisfies
\begin{equation} 
    \mathcal{L}\mathbf{u}(\vv x)
    =
    -\nabla \cdot \big(\mathbb{C}(\mathbf{x}) : \nabla^{s}\mathbf{u}(\mathbf{x})\big),
\end{equation}
where $\mathbb{C}(\mathbf{x})$ is the fourth-order stiffness tensor and
$\nabla^{s}\mathbf{u} = \frac{1}{2}(\nabla \mathbf{u} + \nabla \mathbf{u}^{T})$ is the symmetric gradient.

Both the diffusion and elasticity operators above are in divergence form,
$\mathcal{L}u(\vv x) = -\nabla \cdot (f(\mathbf{x}) \nabla u)$
(with $f = \kappa$ or $f = \mathbb{C}$, respectively).
When $f \in C^1$, the
product rule yields the canonical form (i.e. coefficient functions before operators)  for which Theorem \ref{th:1} is formulated
\begin{equation}
    -\nabla \cdot (f(\vv x) \nabla u(\vv x))
    =
    -f(\vv x) : \nabla^2 u(\vv x) - (\nabla \cdot f(\vv x)) : \nabla u(\vv x).
\end{equation}
Sharp interfaces where $f(\vv x)$ is discontinuous, as occur at material
boundaries, fall outside this direct expansion and would require a weak
formulation already at the problem-statement level, which lies out of the scope of Theorem~1.

Wave propagation in inhomogeneous media can be written, for example, as
\begin{equation} 
    \mathcal{L}u
    =
    \frac{1}{c(\mathbf{x})^2}\frac{\partial^2 u}{\partial t^2}
    -
    \Delta u,
\end{equation}
or, in frequency domain,
\begin{equation} 
    \mathcal{L}u
    =
    -\Delta u
    -
    \frac{\omega^2}{c(\mathbf{x})^2}u.
    \label{eq:wave}
\end{equation}
Here $c(\mathbf{x})$ denotes a spatially varying wave speed.
%
%

These differential equations are linear in the state variable but contain spatially varying or discontinuous coefficients. As long as the coefficients are well defined in the canonical formulation,  the examples fall directly within the scope of Theorem \ref{th:1}. 

To demonstrate the practical utility of our result, we showcase three numerical examples of operator-transformed kernels and their eigenvalues, in descending order,  in Figure \ref{fig:figure2}. In each example, the smallest eigenvalues, plotted in absolute value and therefore identifiable by the upturn at the end of the spectrum, are negative. That is, numerically, the physics-informed kernels appear to violate the PSD requirement -- within machine precision -- and would thus appear to constitute invalid models. Nevertheless, our result implies that this violation is only an artifact of numerical instability and therefore dependent on the particular discretisation. This qualitative behavior is robust to the choice of kernel length scale, as demonstrated by the corresponding parameter study in Appendix \ref{app:experimental_details} (Figure~\ref{fig:figure3}).

\begin{figure}[ht!]
\begin{center}
  \includegraphics[width=0.99\textwidth]{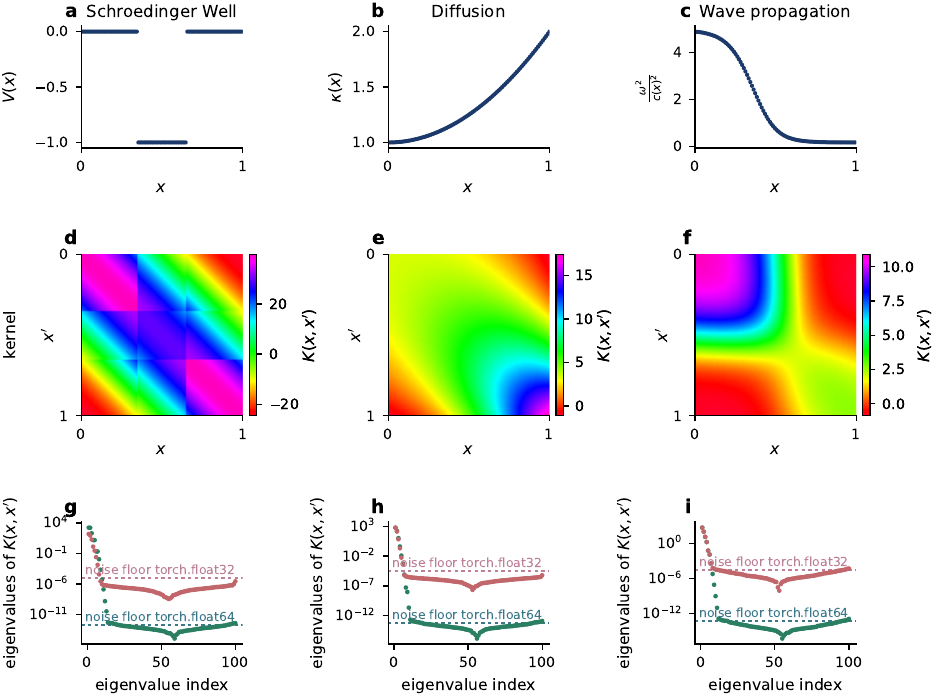}
\end{center}
\caption{Examples of operator-transformed kernels that fall within the scope of Theorem \ref{th:1}. In the columns, we show a Schrödinger equation with a well potential $V(x)$ (left), a diffusion problem with a quadratic diffusion coefficient $\kappa(x)$ (middle), and a wave propagation problem in a spatially varying medium with $c(x) = \tanh(5(x-0.5))+1.5$ (right). We show the spatially varying coefficients in the differential equations of each example in the top row, the values of the resulting operator-transformed kernel matrices evaluated on a rule of 100 collocation points in the middle row, and the kernel matrices' eigenvalues in the bottom row in descending order for 32- and 64-bit floating-point precision. For the smallest (negative) eigenvalues, we plot their absolute value to facilitate a logarithmic scale. The precision of the eigenvalue calculations is bounded by the unit roundoff $u$ (= half of the machine-epsilon) and the largest eigenvalue $\lambda_{max}$ and is marked by dashed lines. The obtained negative eigenvalues are in absolute value smaller than this precision, and hence illustrate the limitations of the current practice of numerically certifying positive semi-definiteness in floating-point arithmetic. This is now guaranteed by our Theorem \ref{th:1}.}
\label{fig:figure2}
\end{figure}

\subsection{Kernel-based machine learning models}
\subsubsection{Gaussian processes}
A Gaussian process (GP) $g \sim \mathcal{GP}(m,k)$ is fully specified by a mean function $m(\mathbf{x})$ and a covariance kernel $k(\mathbf{x},\mathbf{x}')$. For any finite set of inputs, the evaluations of $g$ follow a multivariate Gaussian distribution with covariance matrix $\Sigma_{ij} = k(\mathbf{x}_i,\mathbf{x}_j)$ for pairs $\mathbf{x}_i, \mathbf{x}_j \in \mathcal{X}$. By corollary of the Kolmogorov extension theorem, a Gaussian Process is valid iff the covariance kernel is symmetric and positive semi-definite.

Physics-informed Gaussian processes \cite{raissi_machine_2017, lange-hegermann_algorithmic_2018} incorporate differential equations by applying linear operators to a base kernel $k_0$. This yields operator-transformed kernels of the form \eqref{eq:LkL} in that 
\begin{equation}
\mathcal{L} g \sim \mathcal{GP}\big(\mathcal{L}m, \mathcal{L}k (\mathcal{L}')^{\mathrm{T}}\big)\;,
\end{equation}

i.e. applying a linear differential operator to a GP results again in a GP with operator-transformed kernel.
%
Kernel symmetry is implied by the symmetry of the base kernel and Schwarz's theorem for commuting differential operators. The validity of such models therefore hinges on the positive semi-definiteness of the transformed kernel $K$, directly linking to the main result of this work.

\subsubsection{Neural tangent kernels and physics-informed neural networks}

Neural tangent kernels (NTKs) provide a function-space characterization of neural networks in the infinite-width limit \cite{jacot2018neural}. In this regime, training a neural network by gradient descent is equivalent to kernel regression with a deterministic kernel $\Theta(\mathbf{x},\mathbf{x}')$, the NTK, which is positive semi-definite by construction.

In physics-informed neural networks (PINNs) \cite{raissi2018physics}, differential equations are enforced through the loss function by applying linear differential operators to the network output. In the NTK regime, this induces an operator transformation of the kernel, yielding an effective kernel of the form
\begin{equation} \notag
    K(\mathbf{x},\mathbf{x}') = \mathcal{L} \, \Theta(\mathbf{x},\mathbf{x}') \, (\mathcal{L}')^{\mathrm{T}},
\end{equation}
where $\mathcal{L}$ denotes the differential operator associated with the governing equation.

As in the Gaussian process setting, our theorem guarantees the positive semi-definiteness of the operator-transformed kernel $K$ arising in the NTK regime of physics-informed neural networks. The main result of this work therefore applies directly, providing a sufficient condition under which physics-informed neural tangent kernels are valid, even in the presence of spatially varying or discontinuous coefficients.

\section{Implications}

Our result establishes a simple and easily verifiable sufficient condition for positive semi-definiteness of physics-informed kernels (Gaussian processes, neural tangent kernels, etc.) associated with linear differential operators, including those with non-constant and discontinuous coefficients. Such operators arise widely in engineering and physics, including heat conduction, composite elasticity, wave propagation in inhomogeneous media and quantum mechanics. Because our results are formulated for matrix-valued operators, it also includes  differential equations with inhomogeneities, since unknown variables and inhomogeneities can  be modeled jointly in a multitask setting \cite{raissi_machine_2017}. The result therefore removes a key theoretical obstacle for applying physics-informed kernel models in these settings. 

By guaranteeing positive semi-definiteness under broad conditions, the result resolves the ambiguity between ill-posedness and ill-conditioning, or model validity and numerical stability. In particular, it proves that instabilities observed in practice are not due to ill-posedness of the model, but arise from spectral conditioning or other numerical issues of the resulting covariance matrices. Exact zero eigenvalues may occur by construction in multitask settings or when coefficient functions vanish, but are structurally benign. In contrast, near-zero eigenvalues dominate the condition number and are the primary source of numerical instability. These arise naturally, for example, when higher order differentials or coefficients varying sharply in space induce strong scale separation. 

From a practical perspective, such conditioning effects are well understood and can be mitigated using standard techniques, including appropriate scaling, regularization, and numerical stabilization. In Gaussian processes, adding jitter is common and affords the probabilistic interpretation of observing noisy data. The present result therefore separates a conceptual question of model validity and theoretical soundness from a numerical question of implementation.
Our proof is formulated in a Hilbert space setting and thus applies to essentially all standard base kernels used in practice, including the squared-exponential, polynomial, periodic or Matérn kernel. 

Extensions to weak formulations, including kernels with limited smoothness or more general functional settings such as Banach or Fréchet spaces \cite{matsumoto, harkonen_gaussian_2023}, remain an interesting direction for future work. Within the class of strong linear differential operators, however, the result provides a clean and practically relevant foundation for physics-informed kernel models.
Beyond GPs and NTKs,   also attention mechanisms -- as in transformer architectures -- have recently been understood as being essentially kernel-based mechanisms \cite{santos2026sparse}. Future work may include extensions or applications of our theorem in the context of transformer-based neural operator models.

\section{Proof} \label{sec:proof}
Let ${\cal H}_0$ denote the reproducing kernel Hilbert space (RKHS) of the base kernel $k_0(\vv x, \vv x')$.
We recall that a kernel is positive semi-definite if Definition \ref{def:PSD} is fulfilled.
%
%
We consider multi-output (matrix-valued) kernels $K:\mathcal{X}\times\mathcal{X}\to\mathbb{R}^{p\times p}$, where $[K(\mathbf{x},\mathbf{x}')]_{ls}$ denotes the covariance between outputs $l$ and $s$. An element of an operator-transformed kernel as defined in Eq. \eqref{eq:LkL} can explicitly be written as

 \begin{equation}
     [K(\mathbf{x},\mathbf{x}')]_{ls} = \Big[\op k_0(\vv x_i, \vv x_j)(\op')^T\Big]_{ls} := \sum_{m = 1}^q \op_{lm}\Big[\big[\op_{sm}'k_0(\vv x, \vv x') \big]\big\rvert_{\vv x' = \vv x_j}  \Big]\Big\rvert_{\vv x = \vv x_i} \;,
 \end{equation}
 where  $\big\rvert_{\vv x = \vv x_i}$ denotes evaluations of the expression in the bracket at a point $\vv x_i$ after operator application.

This covariance $K$ of a multi-output GP is a $p \times p$ matrix for each pair of pivot points, and one can easily verify that the structure of Eq.~\ref{eq:PSD1} can be split into outer and inner summations of the form
\begin{equation}
    \sum_{i,j = 1}^N \sum_{l,s = 1}^p c_{il}\Big[K(\vv x_i, \vv x_j)\Big]_{ls}c_{js} \geq 0
    \label{eq:PSD}
\end{equation}
which again needs to hold for all $\vv x_1, ....\vv x_N \in \mathcal{X}, c_{11}, c_{12} ... c_{Np} \in \mathbb{R}$, $N \in \mathbb{N}$.\\
In order to show positive semi-definiteness, we start by expressing the inner sums for arbitrary but fixed $c_{il}, c_{js}, \vv x_i, \vv x_j$: 
\begin{align}
    \sum_{l, s = 1}^p c_{il}c_{js}\big[K(\vv x_i,\vv x_j)\big]_{ls} 
    = \sum_{l, s = 1}^p c_{il}c_{js}\sum_{m = 1}^q \op_{lm}\Big[\big[\op_{sm}'k_0(\vv x, \vv x') \big]\Big\rvert_{\vv x' = \vv x_j}  \Big]\Big\rvert_{\vv x = \vv x_i} 
    \label{eq:cKc}
\end{align}
Changing the order of summation 
results in 
\begin{align}
    \sum_{m = 1}^q\Bigg( \sum_{l = 1}^p c_{il}\op_{lm} \bigg[ \sum_{s = 1}^p c_{js}\op'_{sm}\big[ k_0(\vv x, \vv x') \big]\big\rvert_{\vv x' = \vv x_j}  \bigg]\bigg\rvert_{\vv x = \vv x_i}\Bigg)
    =: \sum_{m = 1}^q \tilde{\op}_{im} \bigg[\tilde{\op}'_{jm}\big[ k_0(\vv x, \vv x') \big]\big\rvert_{\vv x' = \vv x_j}  \bigg]\bigg\rvert_{\vv x = \vv x_i}
    \label{eq:TTk}
\end{align}
with new operators $\tilde{\op}_{im}$, and their corresponding functionals $\fli$ defined as 
\begin{align}
\tilde{\op}_{im}\big\rvert_{\vv x = \vv x_i}&:{\cal H}_0 \rightarrow \mathbb{R}&
    \tilde{\op}_{im}\big[\,f(\vv x)\big]\big\rvert_{\vv x = \vv x_i}&:= \sum_{l = 1}^pc_{il}\op_{lm}\big[\,f(\vv x)\big]\big\rvert_{\vv x = \vv x_i}
\end{align}
The functionals $\tilde{\op}_{jm}'\big\rvert_{\vv x' = \vv x_j}$ are defined analogously. 

\begin{lemma}$\fli$ and $\flj$ are bounded linear functionals on ${\cal H}_0$. 
\end{lemma}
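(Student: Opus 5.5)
The plan is to reduce the claim to a single building block: for a fixed point $\vv z\in\mathcal X$ and a multi-index $\alpha$ with $|\alpha|\le m$, the map $h\mapsto \mathcal D^\alpha h(\vv z)$ is a well-defined bounded linear functional on $\mathcal H_0$. Once this holds, $\fli$ is a finite linear combination of such functionals. Explicitly,
\begin{equation*}
\fli[h]=\sum_{l=1}^p\sum_{k\in\mathcal K} c_{il}\, f_k(\vv x_i)\,\mathcal D^{\alpha_k}h(\vv x_i).
\end{equation*}
Here the coefficients $c_{il}f_k(\vv x_i)$ are fixed finite real numbers, because the $f_k$ are defined (and hence finite) at the evaluation point. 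Linearity is immediate from linearity of differentiation and evaluation. Boundedness follows from the triangle inequality: the norm is at most $\sum |c_{il}f_k(\vv x_i)|\,\|\mathcal D^{\alpha_k}\delta_{\vv x_i}\|$. This is exactly where discontinuity of the $f_k$ is harmless, since only their values at a single point enter. The argument for $\flj$ is identical with $\vv x_j$ in place of $\vv x_i$.

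For the building block, I would invoke the standard result on derivatives in RKHSs of $C^{m,m}$ kernels [Steinwart--Christmann, Cor.~4.36/Lemma~4.34], which is the same source as the paper's definition of continuous differentiability. That result states: if $k_0$ is $m$-times continuously differentiable on the open set $\mathcal X$, then every $h\in\mathcal H_0$ is in $C^m(\mathcal X)$. Moreover, $\partial^{\alpha}_{2}k_0(\cdot,\vv z)\in\mathcal H_0$ (derivative in the second argument) and the derivative reproducing property holds:
\begin{equation*}
\mathcal D^\alpha h(\vv z)=\langle h,\partial^{\alpha}_2 k_0(\cdot,\vv z)\rangle_{\mathcal H_0}.
\end{equation*}
Cauchy--Schwarz then gives $|\mathcal D^\alpha h(\vv z)|\le \|h\|_{\mathcal H_0}\,\big(\mathcal D^\alpha\mathcal D'^\alpha k_0(\vv z,\vv z)\big)^{1/2}$, which is finite by hypothesis. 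This yields boundedness with an explicit constant.

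The main obstacle is justifying this derivative-reproducing property rather than just citing it. To prove it directly, I would proceed by induction on $|\alpha|$, one partial derivative at a time. For a unit vector $e_r$, consider the difference quotients $g_t=(k_0(\cdot,\vv z+te_r)-k_0(\cdot,\vv z))/t\in\mathcal H_0$. Using the reproducing property \eqref{eq:reproducing-property}, I would show that $\|g_t-g_s\|^2_{\mathcal H_0}$ is a combination of second difference quotients of $k_0$. Because $\partial_r\partial'_r k_0$ exists and is continuous, this combination tends to $0$, so $(g_t)$ is Cauchy and converges in $\mathcal H_0$ to some $g$. Since evaluation functionals are continuous, $g$ coincides pointwise with $\partial'_r k_0(\cdot,\vv z)$. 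Pairing with $h$ then gives $\langle h,g\rangle=\lim (h(\vv z+te_r)-h(\vv z))/t=\partial_r h(\vv z)$. Iterating this up to order $m$ requires the mixed derivatives $\mathcal D^{\alpha}\mathcal D'^{\alpha}k_0$ to be continuous, which is exactly the $C^{m,m}$ assumption. Continuity of $\mathcal D^\alpha h$ follows from continuity of $\vv z\mapsto\partial^\alpha_2k_0(\cdot,\vv z)$ in $\mathcal H_0$-norm. With this in hand, the Riesz representers needed later are precisely the combinations $\sum_{l,k} c_{il}f_k(\vv x_i)\,\partial^{\alpha_k}_2k_0(\cdot,\vv x_i)$.
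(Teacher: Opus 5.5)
Your proposal is correct and is essentially the paper's proof. Linearity is immediate. Boundedness follows from the Steinwart--Christmann bound $|\mathcal{D}^{\alpha}h(\mathbf{x})|\le\|h\|_{\mathcal{H}_0}\sqrt{\mathcal{D}^{\alpha}\mathcal{D}'^{\alpha}k_0(\mathbf{x},\mathbf{x})}$, combined with the triangle inequality and the finiteness of the pointwise coefficients $c_{il}f_k(\mathbf{x}_i)$, which gives the same constant $M$. The only difference is that you also sketch a correct difference-quotient proof of the derivative reproducing property, which the paper merely cites, and this identifies the Riesz representers used in the next step.
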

\begin{proof}
    Since the operators $\op_{lm}$ are linear, so are the functionals $\fli$. 
    A functional $L:{\cal H}_0 \rightarrow \mathbb{R}$ is bounded if there exists a bound $0<M<\infty$ such that $|L [h (\vv x)]|\leq M||h(\vv x)||_{{\cal H}_0} \, $ for all $ h(\vv x) \in {{\cal H}_0}$.
    To find such a bound, we use the following corollary.
    \begin{corollary}[{\cite[Corollary~4.36]{svm}}]
        Let ${\cal X}\subset \mathbb{R}^d$ be an open subset, $m \geq 0,$ and $k_0(\vv x, \vv x')$ be an $m$-times continuously differentiable kernel on ${\cal X}$ with RKHS ${\cal H}_0$. Then every $h \in {\cal H}_0$ is $m$-times continuously differentiable, and for $\alpha \in \mathbb{N}_0^d$ with $|\alpha|\leq m$ and $x \in {\cal X}$ we have
   
\begin{align}
    |{\cal D}^\alpha h(\vv x)| \leq ||h(\vv x)||_{{\cal H}_0}\sqrt{{\cal D}^{\alpha}\mathcal{D}^{\alpha'}k_0(\vv x,\vv x)}.
    \label{eq:differentials}
\end{align}
 \end{corollary}
    So for the general form of differential operators $\op_{lm} = \sum_{k\in\mathcal{K}}f_k(\vv x){\cal D}^{\alpha_k}$, evaluated at the fixed point $\vv x_i$, 
    we find a bound for the functionals $\fli$:
\begin{align}
    &\big|\tilde{\op}_{im}\big[\,h(\vv x)\big]\big\rvert_{\vv x = \vv x_i}\big| \leq \sum_{l = 1}^p\big|c_{il}\big|\big|\op_{lm}\big[h(\vv x)\big]\big\rvert_{\vv x = \vv x_i}\big|
    \leq \sum_{l = 1}^p\sum_{k\in\mathcal{K}}\big|f_k(\vv x_i)c_{il}\big|\Big|{\cal D}^{\alpha_k}\big[h(\vv x)\big]\big\rvert_{\vv x = \vv x_i}\Big|\nonumber \\
    &\Bigg(\leq\sum_{l = 1}^p\sum_{k\in\mathcal{K}}\big|f_k(\vv x_i)c_{il}\big| \sqrt{{\cal D}^{\alpha_k}\mathcal{D'}^{\alpha_k}k_0(\vv x, \vv x')\big\rvert_{\vv x = \vv x' = \vv x_i}}\Bigg)||h(\vv x)||_{{\cal H}_0} =:M||h(\vv x)||_{{\cal H}_0}.
    \label{eq:boundedness2}
\end{align}
The analogous statement holds for $\flj$.
\end{proof}

\begin{theorem}[Riesz Representation Theorem\cite{riesz}]
Let $L$ be a bounded linear functional on a Hilbert space ${\cal H}$ (i.e. $L \in {\cal H}^*$). Then, every $L$ has a unique representer $\phi_L \in {\cal H}$, such that $L[h(\vv x)] = \langle h(\vv x), \phi_L(\vv x)\rangle_{\cal H}$ for all $h(\vv x) \in {\cal H}$.
\label{th:riesz}
\end{theorem}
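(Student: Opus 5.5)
The plan is the classical orthogonal-complement argument. Since ${\cal H}$ is a real Hilbert space here (the RKHS ${\cal H}_0$ of a real-valued kernel), no conjugations appear. First I dispose of the trivial case: if $L\equiv 0$, then $\phi_L=0$ represents it. Otherwise I work with the null space $N:=\{h\in{\cal H}: L[h]=0\}$. It is a linear subspace because $L$ is linear. It is closed because $L$ is bounded, hence continuous: if $h_n\to h$ with $h_n\in N$, then $|L[h]| = |L[h-h_n]|\le M\|h-h_n\|_{\cal H}\to 0$. Since $L\neq 0$, $N$ is a proper closed subspace.

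The key step, and the one I expect to be the main obstacle, is producing a nonzero vector $z\in N^\perp$. Here I would invoke the projection theorem: in a complete inner-product space every closed subspace $N$ gives ${\cal H}=N\oplus N^\perp$. Concretely, I pick any $h_0\notin N$ and let $n_0\in N$ be the best approximation of $h_0$ from $N$. Its existence follows from a minimizing sequence, which is Cauchy by the parallelogram law and converges by completeness of ${\cal H}$ and closedness of $N$. Then $z:=h_0-n_0\neq 0$ is orthogonal to $N$. This is the only point where completeness of ${\cal H}$ and boundedness of $L$ are genuinely used, so it carries the real content.

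With $z$ in hand, I note $L[z]\neq 0$, since otherwise $z\in N\cap N^\perp=\{0\}$. For arbitrary $h\in{\cal H}$ the vector
\begin{equation*}
h-\frac{L[h]}{L[z]}\,z
\end{equation*}
lies in $N$ by linearity, so its inner product with $z$ vanishes. This gives $\langle h,z\rangle_{\cal H} = \frac{L[h]}{L[z]}\|z\|_{\cal H}^2$. Hence the choice
\begin{equation*}
\phi_L := \frac{L[z]}{\|z\|_{\cal H}^2}\,z
\end{equation*}
satisfies $L[h]=\langle h,\phi_L\rangle_{\cal H}$ for all $h\in{\cal H}$.

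Uniqueness is immediate. If $\phi_L$ and $\psi_L$ both represent $L$, then $\langle h,\phi_L-\psi_L\rangle_{\cal H}=0$ for every $h$. Taking $h=\phi_L-\psi_L$ gives $\|\phi_L-\psi_L\|_{\cal H}^2=0$, so $\phi_L=\psi_L$.
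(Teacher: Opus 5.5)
The paper gives no proof of this statement. It quotes the Riesz representation theorem from the literature and uses it as a black box to obtain the representers $\phi_{im}$ of the bounded functionals from the Lemma, so there is no argument of the authors' to compare yours against.

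Your proof is the standard null-space/orthogonal-complement argument, and it is correct and complete for the real Hilbert space ${\cal H}_0$ that matters here. Each step holds:
\begin{itemize}
\item $N$ is closed because $L$ is bounded.
\item A nonzero $z\in N^\perp$ exists by the projection theorem.
\item $L[z]\neq 0$ because $N\cap N^\perp=\{0\}$.
\item $h-\frac{L[h]}{L[z]}z\in N$ yields the representer $\phi_L = \frac{L[z]}{\|z\|_{\cal H}^2}z$.
\item Uniqueness follows by testing with $h=\phi_L-\psi_L$.
\end{itemize}

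Two cosmetic remarks. First, boundedness of $L$ is used only to make $N$ closed; the projection step itself uses completeness of ${\cal H}$ and closedness of $N$. So the sentence claiming both hypotheses are ``only'' used at that step is slightly imprecise. Second, for a complex Hilbert space with the inner product linear in the first slot, the representer would be $\overline{L[z]}\,z/\|z\|_{\cal H}^2$.
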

The linearity and boundedness of $\fli$ and $\flj$ enables us to use Theorem \ref{th:riesz}, defining the representer of $\fli$ resp. $\flj$ as $\phi_{im}$ resp. $\phi_{jm}$. Combining this with the reproducing property (Eq.~\ref{eq:reproducing-property}) \cite{rasmussen_gaussian_2006} of the base kernel $k_0(\vv x, \vv x')$ in ${\cal H}_0$,
we can therefore write the action of our functionals on any of the arguments of the base kernel as
\begin{align}
    \tilde{\op}_{im}\big[k_0(\vv x, \vv x')\big]\big\rvert_{\vv x = \vv x_i} = \big\langle k_0(\vv x, \vv x'), \phi_{im}(\vv x)\big\rangle_{{\cal H}_0} = \phi_{im}(\vv x') \quad \forall \vv x' \in {\cal X}\\
    \tilde{\op}'_{jm}\big[k_0(\vv x, \vv x')\big]\big\rvert_{\vv x' = \vv x_j} = \big\langle k_0(\vv x, \vv x'), \phi_{jm}(\vv x')\big\rangle_{{\cal H}_0} = \phi_{jm}(\vv x) \quad \forall \vv x \in {\cal X}.
\end{align}
Equipped with this knowledge, we can further transform equation \ref{eq:cKc}
\begin{align}
    \sum_{l, s = 1}^p c_{il}c_{js}\big[K(x_i,x_j)\big]_{ls}
    &= \sum_{m = 1}^q \tilde{\op}_{im} \bigg[\tilde{\op}'_{jm}\big[ k_0(\vv x, \vv x') \big]\big\rvert_{\vv x' = \vv x_j}  \bigg]\bigg\rvert_{\vv x = \vv x_i}\\
    &= \sum_{m = 1}^q \tilde{\op}_{im} \Big[\big\langle k_0(\vv x, \vv x'), \phi_{jm}(\vv x')\big\rangle_{{\cal H}_0}\Big]\Big\rvert_{\vv x = \vv x_i}\nonumber \\
     &= \sum_{m = 1}^q \big\langle\tilde{\op}_{im} \big[k_0(\vv x, \vv x')\big]\big\rvert_{\vv x = \vv x_i}, \phi_{jm}(\vv x')\big\rangle_{{\cal H}_0}\nonumber
    = \sum_{m = 1}^q \big\langle \phi_{im}(\vv x'), \phi_{jm}(\vv x')\big\rangle_{{\cal H}_0}.
\end{align}

where we used the linearity and boundedness of $\fli$ in the second-to-last step.
Lastly, let's return to the original equation \ref{eq:PSD}:
\begin{align}
        \sum_{i,j = 1}^N \sum_{l,s = 1}^p c_{il}\big[K(\vv x_i, \vv x_j)\big]_{ls}c_{js} 
        &= \sum_{i,j = 1}^N \sum_{m = 1}^q \langle \phi_{im}(\vv x'), \phi_{jm}(\vv x')\rangle_{{\cal H}_0} \\
        &= \sum_{m = 1}^q\left\langle \sum_{i = 1}^N \phi_{im}(\vv x'), \sum_{j = 1}^N\phi_{jm}(\vv x')\right\rangle_{{\cal H}_0} \nonumber \\
    &= \sum_{m = 1}^q \left\Vert \sum_{i = 1}^N\phi_{im}(\vv x')\right\Vert^2_{{\cal H}_0}    
    \geq 0 \nonumber
    \label{eq:proof}
\end{align}
The second-to-last equality is due to the bilinearity of the inner product and the last step is a simple renaming of the summation index $j$ to $i$, which concludes our proof of positive semi-definiteness. \\
\rightline{\qedsymbol}

\section*{Acknowledgement}

J.M. and C.A. were financially supported by the Austrian Research Promotion Agency (FFG) project VENTUS within the AI for Green funding program, Grant no. 910263. S.R. was financially supported by the Austrian Science Fund (FWF), Grant DOI: 10.55776/J4774. This work has been carried out within the framework of the EUROfusion Consortium, funded by the European Union via the Euratom Research and Training Programme (Grant Agreement No 101052200 — EUROfusion). Views and opinions expressed are however those of the author(s) only and do not necessarily reflect those of the European Union or the European Commission. Neither the European Union nor the European Commission can be held responsible for them.

\bibliography{literature_psd}

\begin{appendix}
\section{Experimental Details}
\label{app:experimental_details}

This section provides the experimental details used for Figures ~\ref{fig:graphical_abstract} and~\ref{fig:figure2}. Code is available at \url{https://github.com/moserjo/PSD}.

All experiments use the squared exponential kernel
\[
k_0(\vv x, \vv x') = A\exp\left(-\frac{(\vv x - \vv x')^2}{2l}\right)
\]
as the base kernel, with amplitude $A$ and length scale $l$ as hyperparameters.

\paragraph{Bessel equation (Figure~\ref{fig:graphical_abstract}).}
The operator ${\cal L}$ defined in Equation \eqref{eq:besselop} is matrix valued and gives therefore rise to a multitask (= multi-output) kernel. The two output tasks are conditioned on a total of 22 pivot points: the first task corresponding to the solution $u$ uses two boundary conditions, $(0,0)$ and 
$(5, J_1(5))$ with $J_1(5) \approx -0.33$; the second task, corresponding to ${\cal L}u$, uses 20 linearly spaced pivot points in $[0, 5]$ with their function value constrained to zero. Each task is evaluated at 100 linearly spaced test points. Hyperparameters $A$ and $l$ are optimized for 500 iterations using ADAM with learning rate $0.1$ and convergence was confirmed manually.

\paragraph{Quantum well, diffusion, and wave equations (Figure~\ref{fig:figure2}).}
The kernels of the three examples were transformed using the operators of Equations~\eqref{eq:schroedinger}, \eqref{eq:diffusion}, \eqref{eq:wave} and use the coefficients
\begin{align}
    \notag
    V(x) &= \begin{cases}
        0 & 0.35 < x < 0.65 \\
        -1 & \text{else}
    \end{cases}\\
    \kappa(x) &= 1 + x^2\\
    c(x) &= 1.5 + \tanh(5(x - 0.75)), \quad \omega = 1.
\end{align}
No conditioning is applied here; we evaluate only the prior kernel at 100 linearly spaced points on $x \in [0,1]$, with hyperparameters fixed to $l = 1$ and $A = 1$. Eigenvalues in Figure \ref{fig:figure2} are computed both in single- and in double-precision floating point using the linalg.eigvalsh method from the \texttt{PyTorch} library. Due to the floating-point precision it is not sensible to give actual condition numbers. In all examples, between 6 to 20 out of 100 eigenvalues were above the noise floor.

\paragraph{Length scale dependence (Figure~\ref{fig:figure3}).}
To test whether the ill-conditioning is an artifact of the hyperparameter choice, we recompute the eigenvalues for $l \in \{0.1, 0.5, 1, 3, 10\}$ with 64-bit floating-point precision, keeping $A=1$ (since $A$ only rescales the eigenvalues). As shown in Figure~\ref{fig:figure3}, the  eigenvalues below noise floor shift slightly left and downwards with increasing length scale, but the qualitative behaviour persists: the kernels remain severely ill-conditioned regardless of length scale.

\begin{figure}[ht!]
\begin{center}
  \includegraphics[width=0.9\textwidth]{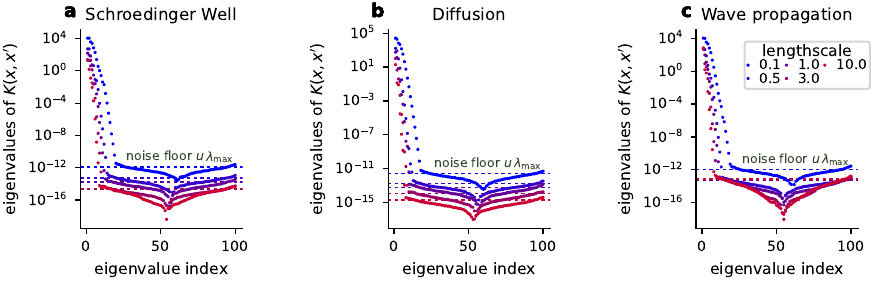}
\end{center}
\caption{Eigenvalues of the three examples from Figure~\ref{fig:figure2} for varying length scale $l \in \{0.1, 0.5, 1, 3, 10\}$. The ill-conditioning persists across all hyperparameter values.}
\label{fig:figure3}
\end{figure}

\end{appendix}

\end{document}